\newtheorem{assumption}{Assumption}
\begin{document}
\title{Finite countermodels for safety verification of parameterized tree systems}
\author{Alexei Lisitsa}

\institute{Department of Computer Science
\\University of Liverpool\\
\texttt{a.lisitsa@liverpool.ac.uk}}

\maketitle
 
\begin{abstract}

In this paper we deal with verification of safety properties of parameterized systems with a tree topology. 
The verification problem is translated to  a purely logical problem of finding a finite 
countermodel for a first-order formula, which further resolved by a generic finite model finding procedure.
A finite countermodel method is shown is at least as powerful as regular tree model checking and as the methods based 
on monotonic abstraction and backwards symbolic reachability. The practical efficiency of the method is illustrated on a set of examples taken from the literature.

 \end{abstract}



\section{Finite Countermodel Method} 
The development of general automated methods for the 
verification of infinite-state and parameterized systems poses a major challenge. In general, such problems 
are undecidable, so one cannot hope for the ultimate solution and the  development should focus on the restricted 
classes of systems and properties.   

In this paper we deal with a very general method for verification of \emph{safety} properties of infinite-state systems  which is based  on a  simple idea.  If an evolution of a computational system is faithfully 
modeled by a derivation in a classical first-order logic then safety verification (non-reachability of unsafe states)  can be reduced to the disproving of a first-order formula. The latter task can be  (partially, at least) tackled by generic automated procedures searching for \emph{finite} countermodels.  

Such an approach to verification was originated in the research on formal verification of security protocols 
(\cite{Weid99,S01,GL08,JW09,Gut}) and later has been extended to the wider classes of infinite-state and parameterized verification tasks. Completeness of the approach for  particular classes of systems (lossy channel systems) and relative completeness with respect to general method of regular model checking has been established in \cite{AL10} and \cite{AL10arxiv} respectively. The method has also been applied to the verification of safety properties of general term rewriting systems and its relative completeness with respect to the tree completion techniques has been shown in \cite{AL11}. 

In this paper we continue investigation of applicability of the method and show its power in the context of verification of safety properties of \emph{parameterized tree-like systems}. We show the relative completeness of FMC methods with 
respect to  \emph{regular tree model checking} \cite{RTMC} and with respect to the methods based on monotonic abstraction and symbolic backwards reachability analysis \cite{PTS}.

\subsection{Preliminaries}~\label{sec:pre}
We assume that the reader is familiar with the basics of first-order logic. In particular, we use without definitions the 
following concepts: first-order predicate logic, first-order models, interpretations of relational, functional and constant symbols, 
satisfaction $M \models \varphi$ of a formula $\varphi$  in a model $M$, semantical consequence $ \varphi\models\psi$, deducibility (derivability) $\vdash$ in first-order logic. We denote interpretations  by square brackets, so, 
for example, $[f]$ denotes an interpretation of a functional symbol $f$ in a model.
We also use the existence of \emph{complete} finite model finding procedures for the first-order predicate 
logic \cite{Model,McCune}, which given a first-order sentence $\varphi$  eventually produce a finite model for $\varphi$ if such a model exists.

\section{Regular Tree Model Checking}

Regular Tree Model Checking (RTMC) is a general method  for the verification of parameterized systems that have tree topology \cite{RTMC,ARTMC}. The definitions of this section are largely borrowed from \cite{RTMC}. 

\subsection{Trees} 
A \emph{ranked alphabet} is a pair $(\Sigma,\rho)$, where $\Sigma$ is a finite set of symbols and $\rho : \Sigma \rightarrow Nat$ is an \emph{arity mapping}. Let $\Sigma_{p}$ denote the set of symbols in $\Sigma$ of arity $p$.  
Intuitively, each node of a tree is a labeled with a symbol from $\Sigma$ and the out-degree of the node is the same as the arity of the symbol. 

\begin{definition}~\label{def:tree} 
A tree $T$ over a ranked alphabet $(\Sigma, \rho)$ is a pair $(S,\lambda)$, where 

\begin{itemize} 
\item $S$, called tree structure, is a finite set of finite sequences over $Nat$. Each sequence $n$ in $S$ is called a \emph{node} of $T$. $S$ is prefix-closed set, that is, if $S$ contains a node $n = b_{1}b_{2}\ldots b_{k}$, then $S$ also contains the node $n' = b_{1}b_{2}\ldots b_{k-1}$ and the nodes $n_{r} = b_{1}b_{2}\ldots b_{k-1}r$, for $r: 0 \le r < b_{k}$. We say that $n'$ is a parent of $n$, and that $n$ is a child of $n'$. A \emph{leaf} of $T$ is a node $n$ which does not have any child.   
\item $\lambda$ is a a mapping from $S$ to $\Sigma$. the number of children of $n$ is equal to $\rho(\lambda(n))$. In particular, if $n$ is a leaf then $\lambda(n) \in \Sigma_{0}$.  
\end{itemize} 
\end{definition}  

We use $T(\Sigma)$ to denote the set of all trees over $\Sigma$. We write $n \in T$ when $n \in S$ and $f \in T$ denotes that $\lambda(n) = f$ for some 
$n \in T$. For a tree $T = (S, \lambda)$ and a node $n \in T$, a subtree of $T$ rooted at $n$ is a tree 
$T' = (S', \lambda_{n})$, where 
$S' \subseteq \{b \mid nb \in S\}$ and $\lambda_{n}(b) = \lambda(nb)$. Notice, that according to 
this definition  a subtree of a tree T consists not necessarily all descendants of some node in T. 

For a ranked alphabet $\Sigma$ let $\Sigma^{\bullet}(m)$ be the ranked alphabet which contains all tuples $(f_{1}, \ldots, f_{m})$ such that $m \ge 1$ and 
$f_{1}, \ldots, f_{m}\in \Sigma_{p}$ for some $p$. We put $\rho((f_{1}, \ldots, f_{m})) = \rho(f_{1})$. 

For trees $T_{1} = (S_{1}, \lambda_{1})$ and $T_{2} = (S_{2}, \lambda_{2})$ we say that $T_{1}$ and $T_{2}$ are structurally equivalent, if $S_{1} = S_{2}$. 

Let $T_{1} = (S, \lambda_{1}), \ldots, T_{m} = (S,\lambda_{m})$  are structurally equivalent trees. Then $T_{1} \times \ldots  \times T_{m}$ denotes the tree $T = (S,\lambda)$ where $\lambda(n) = (\lambda_{1}(n), \ldots, \lambda_{m}(n)$).

\subsection{Tree Automata and Transducers} 

A \emph{tree language} is a set of trees. 

\begin{definition}
A tree automaton over a ranked alphabet $\Sigma$ is a triple $A = (Q,F,\delta)$, where $Q$ is a finite set of states, $F\subseteq Q$ is a set of final states, and 
$\delta$ is a transition relation, represented by a finite set of rules of the form $(q_{1}, \ldots, q_{p}) \rightarrow^{f} q$, where $f \in \Sigma_{p}$ and 
$q_{1}, \ldots q_{p},q \in Q$. 
\end{definition}

A \emph{run} $r$ of $A$ on a tree $T = (S,\lambda) \in T(\Sigma)$ is a mapping from $S$ to $Q$ such that for each node $n \in T$ with children $n_{1}, \ldots, n_{k}$: 
$(r(n_{1}, \ldots, r(n_{k})) \rightarrow^{\lambda (n)} r(n)) \in \delta$. 

For a state $q \in S$ we denote by $T \Rightarrow_{A}^{r} q$ that $r$ ia  run of $A$ on $T$ such that $r(\epsilon) = q$. 
We say that $A$ accepts $T$ if $T \Rightarrow_{A}^{r} q$ for some run $r$ and some $q \in F$. The language of trees accepted by an automaton $A$ is defined
 as $L(A) = \{T \mid T \;\mbox{is accepted by}\; A\}$. The tree language $L$ is called \emph{regular} iff there is a tree automaton $A$ such that 
$L = L(A)$.  

A tree automaton over an alphabet $\Sigma^{\bullet}(2)$ is called \emph{tree transducer}.

Let $D$ be a tree transducer over an alphabet $\Sigma^{\bullet}(2)$. 

An one-step transition relation $R_{D}  \subseteq T(\Sigma) \times T(\Sigma)$ is defined as $R^{D} = \{(T,T') \mid T \times T' \; \mbox{ is accepted by}\; D  \}$. The reflexive and transitive closure of 
$R_{D}$ is denoted by $R_{D}^{\ast}$.

We use $\circ$ to denote the composition of two binary relations defined in the standard way. Let $R^{i}$ denote the $i$th power of $R$ i.e. $i$ compositions of $R$. Then we have $R^{\ast} = \cup_{i \ge 0} R^{i}$.   

For any $L \subseteq T(\Sigma)$ and $R \subseteq T(\Sigma) \times T(\Sigma)$ we denote by $L \star R$ the set $\{y \mid  \exists x (x,y) \in L \times T(\Sigma) \cap R  \}$.

Regular Tree Model Checking deals with the following basic verification task. 

\begin{problem}\label{problem:rtmc}
Given two tree automata $A_{I}$ and $A_{U}$ over an alphabet $\Sigma$ and a tree transducer $D$ over $\Sigma^{\bullet}(2)$.  
Does $(L(A_{I}) \star R^{\ast}_{D}) \cap L(A_{U}) = \emptyset$ hold? 
\end{problem}

In verification scenario, trees over $\Sigma$ denote states of the system to be verified, 
tree automata $A_{I}$ and $A_{U}$ define the sets of trees representing \emph{initial}, respectively, \emph{unsafe} states.   Tree transducer $D$ defines the transitions of the system. Under such assumptions, the positive answer to an instance of  Problem~\ref{problem:rtmc} means the \emph{safety} property is established, namely, none of the unsafe states is reachable along the system transitions  from any of the 
initial states.

The verification in RTMC proceeds by producing a tree transducer $TR$ approximating $R^{\ast}_{D}$ from above,    that is 
$R^{\ast}_{D} \subseteq L(TR)$,  and showing the emptiness of the set $(L(A_{I}) \star L(TR)) \cap L(A_{U})$


\section{From RTMC to FMC} 

In this section we show that the generic regular tree model checking question posed in Problem~\ref{problem:rtmc}   
can be reduced to a purely logical problem of finding a finite countermodel for a first-order logic formula, which then can be resolved by application of generic model finding procedure. 
We show also the relative completeness of finite countermodel method with respect to RTMC. 

Assume we are given an instance of the basic verification problem (over ranking alphabet $\Sigma$), that is 

\begin{itemize}
\item a tree automaton $A_{I} = (Q_{I},F_{I}, \delta_{I})$ accepting a regular set of initial states; 
\item a tree automaton $A_{U} = (Q_{U}, F_{U}, \delta_{U} )$ accepting a regular set of unsafe states; 
\item a tree transducer $D = (Q_{D}, F_{D}, \delta_{D} )$ representation one-step transition relation $R_{D}$. 
\end{itemize}

Now define a set formulae of first-order predicate logic as follows. The vocabulary consists of 

\begin{itemize}
\item constants for all elements of  $Q_{I} \sqcup  Q_{U} \sqcup Q_{D} \sqcup  \Sigma_{0}$;
\item unary predicate symbols $Init^{(1)}$, $Unsafe^{(1)}$
\item binary predicate symbols $Init^{2}$, $Unsafe^{2}$, $R$; 
\item a ternary predicate symbol $T$; 
\item a $p$-ary functional symbol $f_{\theta}$ for every $\theta \in \Sigma_{p}$
\end{itemize}

Given any tree  $\tau$ from $T(\Sigma)$ define its term translation $t_{\tau}$ by induction: 

\begin{itemize}
\item $t_{\tau} = c$ for a tree $\tau$ with one node labeled by $c \in \Sigma_{0}$; 
\item $t_{\tau} = f_{\theta}(t_{\tau_{1}}, \ldots, t_{\tau_{p}})$ for a tree $\tau$ with the root labeled by $\theta \in \Sigma_{p}$ and children $\tau_{1}, \ldots \tau_{p}$.  
\end{itemize}

Let $\Phi$ be the set of the following formulae, which are all assumed to be universally closed: 

\begin{enumerate}
\item $Init^{(2) }(a,q)$ for every $a \in \Sigma_{0}$, $q \in Q_{I}$ and $\rightarrow^{a} q$ in $\delta_{I}$; 
\item $Init^{(2) }(x_{1},q_{1}) \land \ldots \land Init^{(2) }(x_{p},q_{p}) \rightarrow Init^{(2) }(f_{\theta}(x_{1},\ldots,x_{p}),q)$ for every 
$(q_{1},\ldots,q_{p}) \rightarrow^{\theta} q$ in $\delta_{I}$; 
\item $\vee_{q \in F_{I}} Init^{(2) }(x,q) \rightarrow Init^{(1)}(x)$;  
\item $Unsafe^{(2) }(a,q)$ for every $a \in \Sigma_{0}$, $q \in Q_{U}$ and $\rightarrow^{a} q$ in $\delta_{U}$; 
\item $Unsafe^{(2) }(x_{1},q_{1}) \land \ldots \land Unsafe^{(2) }(x_{p},q_{p}) \rightarrow Unsafe^{(2) } (f_{\theta}(x_{1},\ldots,x_{p}),q)$ for every 
$(q_{1},\ldots,q_{p}) \rightarrow^{\theta} q$ in $\delta_{U}$; 
\item $\vee_{q \in F_{U}} Unsafe^{(2) }(x,q) \rightarrow Unsafe^{(1)}(x)$;  
\item $T(a,b,q)$ for every $\rightarrow^{(a,b)} q$ in $\delta_{D}$; 
\item $T(x_{1},y_{1},q_{1}) \land \ldots \land T(x_{p},y_{p},q_{p}) \rightarrow T(f_{\theta_{1}}(x_{1}, \ldots, x_{p}), 
f_{\theta_{2}}(y_{1}, \ldots, y_{p}),q)$  for every $(q_{1}, \ldots, q_{p})\rightarrow^{\theta_{1},\theta_{2}} q$ in $\delta_{D}$; 
\item $\vee_{q \in F_{D}} T(x,y,q) \rightarrow R(x,y)$;
\item $R(x,x)$; 
\item $R(x,y) \land R(y,z) \rightarrow R(x,z)$. 
\end{enumerate}

\begin{proposition}~\label{prop:init_unsafe}(adequacy of Init and Unsafe translations)

If $\tau \in L(A_{I})$ then $\Phi \vdash Init^{(1)}(t_{\tau})$

If $\tau \in L(A_{U})$ then $\Phi \vdash Unsafe^{(1)}(t_{\tau})$

\end{proposition}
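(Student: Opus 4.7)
The plan is to prove both implications by structural induction on the tree, but the natural inductive statement is stronger than what the proposition directly asserts. Specifically, I would first prove the intermediate claim: if there is a run $r$ of $A_I$ on $\tau$ with $r(\epsilon) = q$, then $\Phi \vdash Init^{(2)}(t_\tau, q)$; and analogously for $A_U$. The original statements then follow immediately from formulas (3) and (6) of $\Phi$, since by definition $\tau \in L(A_I)$ means such a run exists with $q \in F_I$, so taking the appropriate disjunct of the antecedent of (3) and applying modus ponens yields $Init^{(1)}(t_\tau)$.

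For the inductive argument, I would handle $Init$ and $Unsafe$ uniformly since the two cases are structurally identical (with formulas (1),(2),(3) replaced by (4),(5),(6)). The base case considers a leaf node $\tau$ with $\lambda(\epsilon) = a \in \Sigma_0$. A run of $A_I$ on $\tau$ with $r(\epsilon) = q$ requires $\rightarrow^{a} q$ to be in $\delta_I$, so formula (1) contributes the ground atom $Init^{(2)}(a, q)$ directly to $\Phi$, and since $t_\tau = a$ we are done. For the inductive step, let $\tau$ have root labeled $\theta \in \Sigma_p$ with children $\tau_1, \ldots, \tau_p$. A run $r$ with $r(\epsilon) = q$ induces runs $r_i$ on the subtrees $\tau_i$ with $r_i(\epsilon) = q_i := r(i)$, where $(q_1, \ldots, q_p) \rightarrow^{\theta} q$ must lie in $\delta_I$. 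The induction hypothesis gives $\Phi \vdash Init^{(2)}(t_{\tau_i}, q_i)$ for each $i$, and instantiating the universally closed formula (2) at $x_i := t_{\tau_i}$ followed by $p$ applications of modus ponens and conjunction introduction yields $\Phi \vdash Init^{(2)}(f_\theta(t_{\tau_1}, \ldots, t_{\tau_p}), q)$, which is exactly $\Phi \vdash Init^{(2)}(t_\tau, q)$ by the definition of the term translation.

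I do not expect any substantive obstacle: the formulas in $\Phi$ were plainly designed as a Horn-style axiomatisation mirroring the bottom-up evaluation of tree automata, and the induction essentially rewrites an accepting run as a derivation tree in first-order logic. The only points requiring care are mild bookkeeping: checking that the universal closure in (2) and (5) is instantiated with the right terms, and verifying that the term translation commutes with the automaton's node-by-node processing. Once the stronger $Init^{(2)}/Unsafe^{(2)}$ statements are proved, the final step invoking (3) or (6) is a single propositional inference using the membership of the root state in the set of final states.
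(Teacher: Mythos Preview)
Your proposal is correct and follows essentially the same approach as the paper: the paper also isolates the intermediate lemma that a run $\tau \Rightarrow^{r}_{A_I} q$ yields $\Phi \vdash Init^{(2)}(t_\tau,q)$, proves it by induction on the depth of the tree (equivalently, your structural induction), and then concludes via clause (3). The only cosmetic difference is that you are slightly more explicit about restricting the run to subtrees and about the modus ponens bookkeeping.
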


\begin{proof} We prove only the first statement, the second one is dealt with in the same way. 

\begin{lemma}\label{lemma:}
For any tree $\tau$ and any run $r$ if $\tau \Rightarrow^{r}_{A_{I}} q$ then $\Phi \vdash Init^{2}(t_{\tau},q)$. 
\end{lemma}

{\it Proof of Lemma}. By induction on the depth of the trees.  

\begin{itemize}

\item \emph{Induction Base Case}.  Assume  
$\tau$ has a depth $0$, that is consists of one vertex labeled by some $a \in \Sigma_{0}$. 
Let $r$ be a run such that $\tau \Rightarrow^{r}_{A_{I}} q$. It follows (by the definition of run)  
that $\rightarrow^{a} q \in \delta_{I}$ and then $Init^{(2)}(a,q)$ is in $\Phi$ (by clause 1 of the definition of $\Phi$) and therefore $\Phi \vdash Init^{(2)}(a,q)$ 
Finally notice that term translation $t_{\tau}$ of $\tau$ is $a$.  

\item \emph{Induction Step Case}.  Assume  $\tau$ has a root labeled by $\theta \in \Sigma_{p}$ and  
$\tau_{1}, \ldots, t_{p}$ are children of the root. For  a run $r$ on $\tau$, 
assume $\tau \Rightarrow^{r}_{A_{I}} q$ and  $\tau_{1} \Rightarrow^{r}_{A_{I}} q_{1}, 
\ldots, \tau_{p} \Rightarrow^{r}_{A_{I}} q_{n}$. By the definition of a run we  have   
$(q_{1}, \ldots, q_{p}) \rightarrow^{\theta} q$ is in $\delta_{I}$. By induction assumption we have 
$\Phi \vdash Init^{(2)}(t_{\tau_{1}},q_{1}), \ldots, \Phi \vdash Init^{(2)}(t_{\tau_{p}},q_{p})$. 
By using clause 2 of the definition  of $\Phi$ we get \\
$\Phi \vdash Init^{(2)}(f_{\theta}(t_{\tau_{1}}, \ldots, t_{\tau_{p}}),q)$ and $\Phi \vdash Init^{2}(t_{\tau},q)$
$\Box$

\end{itemize}

Returning to the proof of the proposition we notice that if $\tau \in L(A_{I})$ then there is a run $r$ such 
that $\tau^{r}_{A_{I}} q$ for some $q \in F_{I}$. By Lemma~\ref{lemma:} $\Phi \vdash \vee_{q \in F_{I}} 
Init^{(2)}(t_{\tau}, q)$. By using clause 3 of the definition of $\Phi$ we then get $\Phi \vdash Init^{(1)}(t_{\tau})$
\end{proof}

\begin{proposition}~\label{prop:encoding}(adequacy of encoding) 

If $\tau \in L(A_{I}) \star R^{\ast}_{D}$ then $\Phi \vdash \exists x \; Init^{(1)}(x) \land R(x,t_{\tau})$

\end{proposition}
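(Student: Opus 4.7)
The plan is to mirror the structure of the proof of Proposition~\ref{prop:init_unsafe}, but now for the transducer $D$ and the predicate $R$, and then to lift a one-step correspondence to the transitive closure using clauses 10 and 11 of $\Phi$. First, unwind the definition: $\tau \in L(A_I) \star R^{\ast}_{D}$ means there exists $\tau_0 \in L(A_I)$ with $(\tau_0, \tau) \in R^{\ast}_{D}$. By Proposition~\ref{prop:init_unsafe} we already have $\Phi \vdash Init^{(1)}(t_{\tau_0})$, so it suffices to prove $\Phi \vdash R(t_{\tau_0}, t_{\tau})$; then an application of existential introduction with the witness $t_{\tau_0}$ yields the desired conclusion.

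The key intermediate step is a transducer-analogue of Lemma~\ref{lemma:}: for any two structurally equivalent trees $\tau_1, \tau_2$ and any run $r$ of $D$ on $\tau_1 \times \tau_2$, if $\tau_1 \times \tau_2 \Rightarrow_D^r q$ then $\Phi \vdash T(t_{\tau_1}, t_{\tau_2}, q)$. This is proved by induction on the depth of the trees, with clause 7 of $\Phi$ handling the base case (leaves labeled by pairs $(a,b)$ with $\rightarrow^{(a,b)} q \in \delta_D$) and clause 8 handling the induction step (where the root is labeled by $(\theta_1, \theta_2)$ and we combine the induction hypotheses for the $p$ pairs of children). From this lemma, if $\tau_1 \times \tau_2 \in L(D)$, i.e.\ $(\tau_1, \tau_2) \in R_D$, then some $q \in F_D$ is reached, and clause 9 immediately yields $\Phi \vdash R(t_{\tau_1}, t_{\tau_2})$.

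With this one-step derivability in hand, I lift it to $R_D^{\ast}$ by induction on $i$ using $R_D^{\ast} = \bigcup_{i \ge 0} R_D^i$. For $i = 0$ the relation is the identity, and clause 10 of $\Phi$ (reflexivity $R(x,x)$) gives $\Phi \vdash R(t_{\tau}, t_{\tau})$. For the inductive step, if $(\tau_0, \tau) \in R_D^{i+1}$ then there is $\tau'$ with $(\tau_0, \tau') \in R_D^i$ and $(\tau', \tau) \in R_D$; the induction hypothesis and the one-step claim supply $\Phi \vdash R(t_{\tau_0}, t_{\tau'})$ and $\Phi \vdash R(t_{\tau'}, t_{\tau})$, and clause 11 (transitivity) concludes $\Phi \vdash R(t_{\tau_0}, t_{\tau})$.

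There is no substantive obstacle here; the argument is a routine structural induction plus a straightforward induction on the length of the $R_D$-derivation. The only point that requires care is the base case $i=0$ of the closure induction, where one must remember to invoke the reflexivity axiom (clause 10) rather than try to extract a run of $D$ — this is why clause 10 is present in $\Phi$ at all, and overlooking it would be the natural pitfall to avoid.
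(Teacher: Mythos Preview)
Your proposal is correct and follows essentially the same approach as the paper: both establish a transducer analogue of Lemma~\ref{lemma:} (the paper simply says ``by the argument analogous to the proof of Proposition~\ref{prop:init_unsafe}''), use clause~9 to pass from $T$ to $R$, and then induct on the length of the $R_D$-derivation using clauses~10 and~11. The only cosmetic difference is that you fix the witness $\tau_0$ up front and prove $\Phi \vdash R(t_{\tau_0}, t_\tau)$ before applying existential introduction, whereas the paper carries the existential statement through the induction; the underlying steps are identical.
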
 

\begin{proof} Easy induction on the length of transition sequences. 
\begin{itemize}
\item \emph{Induction Base Case.} Let $\tau \in L(A_{I}) \subseteq L(A_{I}) \star R^{\ast}_{D}$. Then
 $\Phi \vdash Init^{(1)}(t_{\tau})$ (by Proposition~\ref{prop:init_unsafe}) and, further 
$\Phi \vdash \exists x \; Init^{(1)}(x) \land R(x,t_{\tau})$ (using clause 10).  
\item \emph{Induction Step Case.} Let $\tau \in L(A_{I}) \star R^{n+1}_{D}$. Then there exists $\tau'$ such that
$\tau' \in L(A_{I}) \star R^{n}_{D}$ and $R(\tau',\tau)$ holds. Further,  by the argument analogous to the proof of Proposition~\ref{prop:init_unsafe}, $R(\tau',\tau)$ entails $\Phi \vdash \vee_{q \in F_{D}}T(t_{\tau'},t_{\tau},q)$ 
and further  $\Phi \vdash R(t_{\tau'},t_{\tau})$ (using clause  9). from this, the clause 11 and the induction assumption 
$\Phi \vdash \exists x \; Init^{(1)}(x) \land R(x,t_{\tau})$ follows.

\end{itemize}

Assume $\tau \in L(A_{I}) \star R^{\ast}_{D}$ then by definition of $\star$ there exists $\tau_{0} \in L(A_{I})$
such that $R_{D}^{\ast}(\tau_{0},\tau)$ holds.   
\end{proof}

\begin{corollary}~\label{cor:verification}(correctness of the verification method)

If $\Phi \not\vdash \exists x \exists y (Init^{(1)}(x) \land R(x,y) \land Unsafe^{(1)}(y)$ 
then 
$(L(A_{I}) \star R^{\ast}_{D}) \cap L(A_{U})=\emptyset$
\end{corollary}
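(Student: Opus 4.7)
The plan is to argue by contrapositive: assume $(L(A_{I}) \star R^{\ast}_{D}) \cap L(A_{U}) \ne \emptyset$ and show that $\Phi \vdash \exists x \exists y (Init^{(1)}(x) \land R(x,y) \land Unsafe^{(1)}(y))$. Since the non-emptiness assumption immediately yields a witness tree $\tau$ lying in both $L(A_{I}) \star R^{\ast}_{D}$ and $L(A_{U})$, the strategy is to apply the two adequacy results (Proposition~\ref{prop:init_unsafe} and Proposition~\ref{prop:encoding}) to $\tau$ and then glue their conclusions together with a single application of existential generalisation.

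Concretely, I would carry out the argument in three short steps. First, from $\tau \in L(A_{U})$, Proposition~\ref{prop:init_unsafe} gives $\Phi \vdash Unsafe^{(1)}(t_{\tau})$. Second, from $\tau \in L(A_{I}) \star R^{\ast}_{D}$, Proposition~\ref{prop:encoding} gives $\Phi \vdash \exists x\; (Init^{(1)}(x) \land R(x,t_{\tau}))$. Third, conjoining these two derivations (a routine step in first-order logic, since both share the same premise set $\Phi$) yields $\Phi \vdash \exists x\; (Init^{(1)}(x) \land R(x,t_{\tau}) \land Unsafe^{(1)}(t_{\tau}))$, and then existential generalisation on the closed term $t_{\tau}$ for a fresh variable $y$ produces $\Phi \vdash \exists x \exists y\; (Init^{(1)}(x) \land R(x,y) \land Unsafe^{(1)}(y))$. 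Contraposition of this chain of implications gives exactly the statement of the corollary.

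There is essentially no hard step here: the entire content is delegated to the two preceding propositions, which together guarantee that any concrete reachability of an unsafe state from an initial state forces the existential formula to be a logical consequence of $\Phi$. The only mildly delicate point is the last move, where the same ground term $t_{\tau}$ must fill both the $R(x,y)$ and the $Unsafe^{(1)}(y)$ positions; this is justified simply because both derivations refer to the same tree, so the witness produced for $y$ by existential generalisation is the same term in both, and no additional assumption or lemma is required.
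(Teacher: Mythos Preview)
Your proof is correct and is precisely the argument implicit in the paper: the corollary is stated without an explicit proof because it is the immediate contrapositive combination of Proposition~\ref{prop:init_unsafe} and Proposition~\ref{prop:encoding}, exactly as you spell out. Nothing further is needed.
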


The corollary~\ref{cor:verification} serves as a formal underpinning of the proposed FCM (finite countermodel) verification method. In order to prove safety, that is $(L(A_{I}) \star R^{\ast}_{D}) \cap L(A_{U})=\emptyset$ it is sufficient to demonstrate $\Phi \not\vdash \exists x \exists y (Init^{(1)}(x) \land R(x,y) \land Unsafe^{(1)}(y)$. In the FCM method we delegate this task to the generic finite model finding procedure, which searches for the finite countermodels for \\
$\Phi \rightarrow  \exists x \exists y (Init^{(1)}(x) \land R(x,y) \land Unsafe^{(1)}(y)$.

\subsection{Relative completeness of FCM with respect to RTMC}

In general, searching for finite countermodels to disprove non-valid first-order formulae may not always lead to success, because for some formulae countermodels are inevitably infinite. Here we show, however,  this is not the case for the first-order encodings of the problems which can be positively answered by Regular Tree Model Checking. It follows then that FCM is at least as powerful in establishing safety as RTMC, provided a complete finite model finding procedure is used.

\begin{theorem}~\label{th:completeness}(relative completeness of FCM)
Given an instance of the basic verification problem for RTMC, that is two tree automata $A_{I}$ and $A_{U}$ 
over an alphabet $\Sigma$ and a tree transducer $D = (Q_{D},F_{D},\delta_{D})$ over $\Sigma^{\bullet}(2)$.  
If  there exists a regular tree language ${\cal R}$ such that $(L(A_{I}) \star R^{\ast}_{D}) \subseteq {\cal R}$ and  
${\cal R} \cap L(A_{U}) = \emptyset$ then there is a finite countermodel for $\Phi \rightarrow \exists x \exists y (Init^{(1)}(x) \land R(x,y) \land Unsafe^{(1)}(y)$  

\end{theorem}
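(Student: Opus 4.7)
The plan is to construct a finite structure $M$ satisfying $\Phi$ in which the conclusion $\exists x \exists y\,(Init^{(1)}(x) \land R(x,y) \land Unsafe^{(1)}(y))$ fails. Fix a deterministic, complete tree automaton $A_R = (Q_R, F_R, \delta_R)$ accepting $\mathcal{R}$. Take the domain of $M$ to be $Q_R \sqcup Q_I \sqcup Q_U \sqcup Q_D$, interpret every state-constant as itself, every leaf constant $a \in \Sigma_0$ as the $A_R$-state reached on the single-node tree $a$, and every $f_\theta$ on $Q_R^p$ via $\delta_R$ (arbitrarily on other tuples, which are never produced by ground terms). An easy induction on tree depth then yields $t_\tau^M = \pi_R(\tau)$, the unique $A_R$-state of $\tau$.

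Next, I would define the predicates from reachability in synchronised products: $Init^{(2)}(d,q)$ is the image of the joint run of $A_R \times A_I$, $Unsafe^{(2)}(d,q)$ of $A_R \times A_U$, and $T(d_1,d_2,q)$ of $A_R \times A_R \times D$ on structurally equivalent tree pairs; the unary predicates are the existential closures over accepting states, giving $Init^{(1)}(d) \Leftrightarrow d \in \pi_R(L(A_I))$ and $Unsafe^{(1)}(d) \Leftrightarrow d \in \pi_R(L(A_U))$. Finally $R$ is the reflexive-transitive closure of $\{(d_1,d_2) : \exists q \in F_D,\ T(d_1,d_2,q)\}$. Verifying clauses 1--11 of $\Phi$ in $M$ is then a routine induction on runs of these product automata.

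For the refutation of the goal, observe that if both $Init^{(1)}(x)$ and $Unsafe^{(1)}(y)$ held, then $x \in \pi_R(L(A_I)) \subseteq F_R$, since $L(A_I) \subseteq L(A_I) \star R^*_D \subseteq \mathcal{R}$, while $y \in \pi_R(L(A_U)) \subseteq Q_R \setminus F_R$ by $\mathcal{R} \cap L(A_U) = \emptyset$. It therefore suffices to argue that $R$-reachability from $\pi_R(L(A_I))$ never exits $F_R$. This is the main obstacle: a single transducer witness $(\tau,\sigma) \in R_D$ with $\pi_R(\tau) \in F_R$ does not automatically give $\pi_R(\sigma) \in F_R$, because the hypothesis only states that $\mathcal{R}$ over-approximates $L(A_I) \star R^*_D$, not that $\mathcal{R}$ itself is $R_D$-closed. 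The resolution is to replace $A_R$ by a finer automaton $A_R^* = A_R \times A_D^{\mathrm{prof}}$, where $A_D^{\mathrm{prof}}$ is the subset-construction automaton that assigns to each tree $\tau$ the set of pairs $(q_D, \pi_R(\sigma))$ for which $D$ has an accepting run on $\tau \times \sigma$ ending in $q_D$; this refinement is still finite, and it makes the projection a forward simulation for $R_D$, so every $R$-chain in the refined quotient lifts to an authentic $R^*_D$-chain on trees starting in $L(A_I)$. The inclusion $L(A_I) \star R^*_D \subseteq \mathcal{R}$ then keeps the lifted chain inside $\mathcal{R}$, forcing $y \in F_{R^*}$ and producing the required contradiction.
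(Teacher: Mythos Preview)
Your construction mirrors the paper's exactly: take a deterministic automaton $A$ for $\mathcal{R}$, use its states (together with $Q_I$, $Q_U$, $Q_D$ and a default element) as the domain, interpret the $f_\theta$ via $A$'s transition function, and let all predicates be the least relations satisfying clauses 1--11. The paper then asserts in a single line, ``by the minimality condition on interpretations of $Init^{(1)}$ and $R$,'' that $[Init^{(1)}] \star [R] \subseteq \{[t_\tau] : \tau \in L(A_I) \star R^{\ast}_D\}$, and finishes from there. You are right to flag this step: minimality of $[R]$ only says that each pair arises from a finite chain of one-step witnesses $(\tau_i,\sigma_i) \in R_D$, but consecutive witnesses share merely an $A$-state, not an actual tree, so the quotient chain need not lift to a genuine $R^{\ast}_D$-chain. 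The paper glosses over this; you do not, which is to your credit.

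Your repair, however, does not close the gap. Refining by the profile $\tau \mapsto \{(q_D,\pi_R(\sigma)) : \tau \times \sigma \Rightarrow_D q_D\}$ guarantees that if $\tau,\tau'$ have equal refined state and $(\tau,\sigma) \in R_D$, then some $\sigma'$ exists with $(\tau',\sigma') \in R_D$ and $\pi_R(\sigma') = \pi_R(\sigma)$; but nothing forces $\sigma$ and $\sigma'$ to share the same \emph{profile}. Hence the refined projection is still not a forward simulation for $R_D$: after one step the lifted tree may sit in a different refined class than the next quotient state, and the lifting argument breaks at step two. Iterating the profile construction is the obvious next move, but you give no reason for it to stabilise at a finite automaton. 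The clean way through is to strengthen the hypothesis to an \emph{inductive} regular invariant, i.e.\ $L(A_I) \subseteq \mathcal{R}$ and $\mathcal{R} \star R_D \subseteq \mathcal{R}$; then $F_R$ is manifestly closed under the abstract one-step relation and the whole difficulty evaporates. As stated, with an arbitrary regular separator, the theorem needs more than either your argument or the paper's provides.
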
 

\begin{proof}   Let $A = (Q,F,\delta)$ be a \emph{deterministic} tree automaton recognizing the tree language ${\cal R}$, i.e. $L(A) = {\cal R}$.   
We take $Q \cup Q_{I} \cup Q_{U} \cup Q_{D} \cup \{ e \}$ to be domain of the required finite model. Here  $e$ is a distinct element not in $Q \cup Q_{I} \cup Q_{U} \cup Q_{D}$. 

Define interpretations as follows. 

\begin{itemize}
\item For $a \in \Sigma_{0}$ $[a] = q \in Q$ such that $\rightarrow^{a} q$ is in $\delta$;  
\item For $\theta \in \Sigma_{p}$ $[f_{\theta}](q_{1}, \ldots, q_{p}) = q$ for any $(q_{1}, \ldots, q_{p}) \rightarrow^{\theta} q$ in $\delta$, and $[f_{\theta}](\ldots) = e$ otherwise;    
\item Interpretations of $Init^{2}$ and $Init^{1}$ are defined inductively, as the least subsets of pairs, respectively, elements of the domain, satisfying the formulae $(1)$ - $(3)$ (and assuming all interpretations above);
\item Interpretations of $Unsafe^{2}$ and $Unsafe^{1}$ are defined inductively, as the least subsets of pairs, respectively, elements of the domain, satisfying the formulae $(4)$ - $(6)$ (and assuming all interpretations above);
\item Interpretation of $T$ is  defined inductively, as the least subsets of triples satisfying the formulae $(7)$ - $(8)$ (and assuming all interpretations above); 
\item Interpretation of $R$ and $Init^{1}$ is  defined inductively, as the least subsets of pairs, satisfying the formulae $(9)$ - $(11)$ (and assuming all interpretations above);
\end{itemize}

Such defined a finite model satisfies $\Phi$ (by construction). 
Now we  check that  $\neg \exists x \exists y (Init^{(1)}(x) \land R(x,y) \land Unsafe^{(1)}(y)$ is satisfied in the model 
We have 
\begin{enumerate}
\item $[Init^{(1)}] \star [R] \subseteq \{[t] \mid t \in L(A_{I}) \star R^{\ast}_{D}\}$ (by the minimality condition on interpretations of $Init^{(1)}$ and $R$); 
\item $\{[t]\mid t \in L(A_{I}) \star R^{ast}_{D}\} \subseteq F \subseteq Q$ (by interpretations of terms and condition 
$(L(A_{I}) \star R^{\ast}_{D}) \subseteq {\cal R}$);
\item $[Init^{(1)}] \star [R] \subseteq F$ (by 1 and 2); 
\item $[Unsafe^{(1)}] = \{[t] \mid t \in L(A_{U})\}$ (by definition of $[Unsafe^{1}]$, in particular by the  minimality condition);
\item  $\{[t] \mid t \in L(A_{U})\} \cap F = \emptyset$ (by condition ${\cal R} \cap L(A_{U}) = \emptyset$); 
\item $Unsafe^{(1)} \cap F = \emptyset$ (by 4 and 5); 
\item   $[Init^{(1)}] \star [R] \cap Unsafe^{(1)} = \emptyset$ (by 3 and 6); 
\end{enumerate} 

\end{proof}

\section{The case study}

In this section we illustrate FCM method by applying it to the verification of Two-way Token protocol. The system consists of finite-state processes connected to form a binary tree structure. Each process stores a single bit which represents the fact that  the process has a token. During operation of the protocol the token can be passed up or down the tree. The correctness condition is that no two or more tokens ever appear. In parameterized verification we would like to establish correctness for all possible sizes of trees. 

We take RTMC-style  specification of Two-way Token  from \cite{RTMC}. Let $\Sigma = \{t,n,T,N\}$ be the alphabet.
Here $t,n \in \Sigma_{0}$ label processes on the leaves of a tree, and $T,N \in \Sigma_{2}$ label processes on the inner nodes of a tree. Further, $t,T$ label processes with a token and $n,N$ label processes without tokens.  

The automaton $A_{I} = (Q_{I},F_{I},\delta_{I})$ accepts the initial configurations of the protocol, that is the trees with exactly one token.  Here 
$Q_{I} = \{q_{0}, q_{1}\}$, $F_{I} = \{q_{1}\}$ and $\delta_{I}$ consists of the following transition rules: 
\begin{center}
  \begin{tabular}{ r  r  r }
     $\rightarrow^{n} q_{0}$ & $\rightarrow^{t} q_{1}$ &  $(q_{0},q_{0}) \rightarrow^{T} q_{1}$  \\
     $(q_{0},q_{0}) \rightarrow^{N} q_{0}$ & $(q_{0},q_{1}) \rightarrow^{N} q_{1}$ & 
     $(q_{1},q_{0}) \rightarrow^{N} q_{1}$ 
    \end{tabular}
\end{center}

The tree transducer $D = (Q_{D},F_{D},\delta_{D})$ over $\Sigma^{\bullet}(2)$ represents the transitions of the protocol. 
Here $Q_{D} = \{q_{0},q_{1},q_{2},q_{3}\}$, $F = \{q_{2}\}$ and $\delta_{D}$ consists of the following transition rules:

\begin{center}
  \begin{tabular}{ r  r }
     $\rightarrow^{(n,n)} q_{0}$ & $\rightarrow^{(t,n)} q_{1}$ \\ 
     $\rightarrow^{(n,t)} q_{3}$ & $(q_{0},q_{0}) \rightarrow^{(N,N)} q_{0}$\\  
     $(q_{0},q_{2}) \rightarrow^{(N,N)} q_{2}$ & $(q_{2},q_{0}) \rightarrow^{(N,N)} q_{2}$ \\ 
     $(q_{0},q_{0}) \rightarrow^{(T,N)} q_{1}$ & $(q_{3},q_{0}) \rightarrow^{(T,N)} q_{2}$ \\
     $(q_{0},q_{3}) \rightarrow^{(T,N)} q_{2}$ & $(q_{0},q_{1}) \rightarrow^{(N,T)} q_{2}$ \\ 
     $(q_{1},q_{0}) \rightarrow^{(N,T)} q_{2}$ & $(q_{0},q_{0}) \rightarrow^{(N,T)} q_{3}$ \\  
    \end{tabular}
\end{center}

The automaton $A_{U} = (Q_{U},F_{U},\delta_{U})$ accepts unsafe (bad)  configurations of the protocol, that is the trees with at least two tokens.  Here 
$Q_{U} = \{q_{0}, q_{1}, q_{2}\}$, $F_{U} = \{q_{2}\}$ and $\delta_{U}$ consists of the following transition rules: 
\begin{center}
  \begin{tabular}{ r  r  r }
     $\rightarrow^{n} q_{0}$ & $\rightarrow^{t} q_{1}$ &  $(q_{0},q_{0}) \rightarrow^{N} q_{0}$  \\
     $(q_{0},q_{0}) \rightarrow^{T} q_{1}$ & $(q_{0},q_{1}) \rightarrow^{N} q_{1}$ & 
     $(q_{1},q_{0}) \rightarrow^{N} q_{1}$\\
     $(q_{0},q_{1}) \rightarrow^{T} q_{2}$ & $(q_{1},q_{0}) \rightarrow^{T} q_{2}$ & 
     $(q_{1},q_{1}) \rightarrow^{T} q_{2}$\\
     $(q_{0},q_{2}) \rightarrow^{T} q_{2}$ & $(q_{2},q_{0}) \rightarrow^{T} q_{2}$ & 
     $(q_{1},q_{2}) \rightarrow^{T} q_{2}$\\
     $(q_{2},q_{1}) \rightarrow^{T} q_{2}$ & $(q_{2},q_{2}) \rightarrow^{T} q_{2}$ & 
     $(q_{1},q_{1}) \rightarrow^{N} q_{2}$\\
     $(q_{0},q_{2}) \rightarrow^{N} q_{2}$ & $(q_{2},q_{0}) \rightarrow^{N} q_{2}$ & 
     $(q_{1},q_{2}) \rightarrow^{N} q_{2}$\\
      $(q_{2},q_{1}) \rightarrow^{N} q_{2}$ & $(q_{2},q_{2}) \rightarrow^{N} q_{2}$ \\
    \end{tabular}
\end{center}

The set $\Phi$ of the following formulae presents a translation of the verification problem.  
We use the syntax of first-order logic used in   Mace4 finite model finder \cite{McCune}.

{\footnotesize

\begin{verbatim}



T(n,n,q0).
T(t,n,q1).
T(n,t,q3).
T(x,z,q0) & T(y,v,q0) -> T(fT(x,y),fN(z,v),q1).
T(x,z,q1) & T(y,v,q0) -> T(fN(x,y),fT(z,v),q2).
T(x,z,q0) & T(y,v,q1) -> T(fN(x,y),fT(z,v),q2).
T(x,z,q0) & T(y,v,q0) -> T(fN(x,y),fN(z,v),q0).
T(x,z,q0) & T(y,v,q2) -> T(fN(x,y),fN(z,v),q2).
T(x,z,q2) & T(y,v,q0) -> T(fN(x,y),fN(z,v),q2).
T(x,z,q3) & T(y,v,q0) -> T(fT(x,y),fN(z,v),q2). 
T(x,z,q0) & T(y,v,q3) -> T(fT(x,y),fN(z,v),q2). 
T(x,z,q0) & T(y,v,q0) -> T(fN(x,y),fT(z,v),q3). 

% Initial states automaton 

Init(n,q0).
Init(t,q1).
Init(x,q0) & Init(y,q0) -> Init(fT(x,y),q1). 
Init(x,q0) & Init(y,q1) -> Init(fN(x,y),q1).
Init(x,q0) & Init(y,q0) -> Init(fN(x,y),q0).
Init(x,q1) & Init(y,q0) -> Init(fN(x,y),q1).

% Bad states automaton 

Bad(n,q0).
Bad(t,q1).
Bad(x,q0) & Bad(y,q0) -> Bad(fN(x,y),q0).
Bad(x,q0) & Bad(y,q0) -> Bad(fT(x,y),q1).
Bad(x,q0) & Bad(y,q1) -> Bad(fN(x,y),q1).
Bad(x,q1) & Bad(y,q0) -> Bad(fN(x,y),q0).

Bad(x,q0) & Bad(y,q1) -> Bad(fT(x,y),q2).
Bad(x,q1) & Bad(y,q0) -> Bad(fT(x,y),q2).
Bad(x,q1) & Bad(y,q1) -> Bad(fN(x,y),q2).
Bad(x,q1) & Bad(y,q2) -> Bad(fT(x,y),q2).

Bad(x,q2) & Bad(y,q1) -> Bad(fT(x,y),q2).
Bad(x,q2) & Bad(y,q2) -> Bad(fT(x,y),q2).
Bad(x,q1) & Bad(y,q1) -> Bad(fN(x,y),q2).
Bad(x,q0) & Bad(y,q2) -> Bad(fN(x,y),q2).
Bad(x,q2) & Bad(y,q0) -> Bad(fN(x,y),q2).
Bad(x,q1) & Bad(y,q2) -> Bad(fN(x,y),q2).
Bad(x,q2) & Bad(y,q1) -> Bad(fN(x,y),q2).
Bad(x,q2) & Bad(y,q2) -> Bad(fN(x,y),q2).

T(x,y,q2) -> R(x,y).
R(x,y) & R(y,z) -> R(x,z).

Init(x,q1) -> Init1(x). 
Bad(x,q2) -> Bad1(x).


 
\end{verbatim}
}


According to Proposition~\ref{prop:encoding} and Corollary~\ref{cor:verification} to establish safety for Two-way Token protocol  it does suffice to show
$\Phi \not\vdash \exists x \exists y ((Init1(x) \land R(x,y)) \land Bad1(y))$. We delegate this task to Mace4 finite model finder and it finds a countermodel for $\Phi \rightarrow \exists x \exists y ((Init1(x) \land R(x,y)) \land  Bad1(y))$ in $0.03s$. The parameterized protocol is verified. Actual Mace4 input and output can be found in \cite{AL09}.

\section{Monotonic abstraction and symbolic reachability vc FCM}

Regular Tree Model Checking provides with a general method for the verification parameterized protocols for tree-shaped architectures. In \cite{PTS} a lightweight alternative to RTMC  was proposed. It utilizes a generic approach to safety verification using \emph{monotonic abstraction} and \emph{symbolic reachability} applied to tree rewriting systems.  This generic approach  has previously been successfully applied to the verification of parameterized linear system \cite{Mon} (as an alternative to standard Regular Model Checking).   In this section we demonstrate the flexibility of the FCM approach and show that one can translate safety verification problems for parameterized tree-shaped systems formulated using tree rewriting  into the problem of disproving a first-order formulae using the same basic principles (reachability as FO derivability). For defined translation we show the relative completeness of the FCM with respect to monotonic abstraction and symbolic reachability  and demonstrate its practical efficiency.   

\subsection{Parameterized Tree Systems}~\label{subsec:pts} 

The approach of \cite{PTS} to the verification of parameterized tree systems adopts the following viewpoint. A configuration of the system is represented by a tree over a finite alphabet, where elements of the alphabet represent the local states of the individual processes. The behaviors of the system is specified by a set of \emph{tree rewriting} rules, which describe  how the processes perform transitions. Transitions are enabled by the local states of the process together with the states of children and parent processes. 

\begin{definition}~\label{def:tree_over-states} 
A tree $T$ over a set of states  $Q$ is a pair $(S,\lambda)$, where 

\begin{itemize} 
\item $S$ is a tree structure (cf. Definition~\ref{def:tree}) 
\item $\lambda$ is a a mapping from $S$ to $Q$.  
\end{itemize} 
\end{definition}  

Notice that  trees over a set of states are similar to the trees over ranked alphabets (Definition~\ref{def:tree}) with the only difference is that the same state can label the vertices with  different number of children (e.g. leaves of the tree and internal vertices).  

In what follows to assume for simplicity of presentation (after \cite{PTS}) that all trees   are (no more than) binary, that is every node has either one or two  children (internal node) or no children (leaf).  It is straightforward to extend all constructions and results to the general case of not necessarily binary trees. Notice that configurations of the tree systems will be modeled by complete binary trees. Incomplete binary trees (which may contain nodes with one child) will  appear only in the rewrite rules.

\begin{definition}~\label{def:pts}
A parameterized tree system ${\cal P}$ is a tuple $(Q,R)$, where $Q$ is a finite set of states and $R \subseteq T(Q \times Q)$ is a finite set of rewrite rules.  
\end{definition}

For each rule $r = (S,\lambda) \in R$ we associate two trees, called \emph{left} and \emph{right} trees of $r$. We define $lhs(r) = (S,lhs(\lambda))$ and $rhs(r) = (S,rhs(\lambda))$, where $lhs(r)$ and $rhs(r)$ are left, respectively right projection of $\lambda$. 

We will denote (labeled) binary trees by bracket expressions in a standard way. 

\begin{example}
Let $Q = \{q_{0}, q_{1}, q_{2}\}$ then $r = \langle q_{0},q_{1}\rangle(\langle q_{1},q_{1} \rangle, \langle q_{2},q_{0}\rangle) \in T(Q \times Q)$ is a rewriting rule.  This rule has $\bullet (\bullet, \bullet)$ as it tree structure with one root and two leaves.  The pairs of states  $\langle q_{0},q_{1}\rangle$, $\langle q_{1},q_{1} \rangle$, $\langle q_{2},q_{0}\rangle$ label the root and two leaves respectively. We also have $lhs(r) = q_{0}(q_{1},q_{2})$ and $rhs(r) = q_{1}(q_{1},q_{0})$. 

\end{example}

\begin{example}
Let $Q$ be as above then $\langle q_{1},q_{2}\rangle (\langle q_{0},q_{1}\rangle)$ is a rewriting rule with the structure of incomplete binary tree $\bullet (\bullet)$ 
\end{example}

Given a parameterized tree system ${\cal P} = (Q,R)$ define one step transition relation $\Rightarrow_{\cal P} \subseteq 
T(Q) \times T(Q)$ as follows:   
%
$\tau_{1} \Rightarrow_{\cal P} \tau_{2}$ iff for some $r \in R$ $\tau_{1}$ contains $lhs(r)$ as a subtree and $\tau_{2}$ obtained from $\tau_{1}$ by replacing this subtree with $rhs(r)$. Since $lhs(r)$ and $rhs(r)$ have the same tree structure, the operation of replacement and one step transition relation are well-defined.   

\begin{example}
Let  ${\cal P} = (Q,R)$ with $Q = \{q_{0}, q_{1}, q_{2}\}$ and $R =\{\langle q_{0},q_{1}\rangle(\langle q_{1},q_{1} \rangle, \langle q_{2},q_{0}\rangle) \}$. Then we have (with the subtrees refered to in the definition of $\Rightarrow_{\cal P}$  inderlined):  
\begin{itemize}
\item $\underline{q_{0}(q_{1},q_{2})} \Rightarrow_{\cal P} \underline{q_{1}(q_{1},q_{0})}$;
\item $q_{2}(\underline{q_{0}(q_{1},q_{2})},q_{1}) \Rightarrow_{\cal P} q_{2}(\underline{q_{1}(q_{1},q_{0}}),q_{1})$;   
\item $\underline{q_{0}}(\underline{q_{1}}(q_{1},q_{0}),\underline{q_{2}}(q_{0},q_{2})) \Rightarrow_{\cal P} 
\underline{q_{1}}(\underline{q_{1}}(q_{1},q_{0}),\underline{q_{0}}(q_{0},q_{2})) $; 
\end{itemize}

\end{example}

\noindent 
Denote transitive and reflexive closure of $\Rightarrow_{\cal P}$ by $\Rightarrow_{\cal P}^{\ast}$. 

\begin{definition}~\label{def:embedding}(embedding)
For $\tau_{1} = (S_{1},\lambda_{1})$ and $\tau_{2} = (S_{2},\lambda_{2})$ an injective  function $f: S_{1} \rightarrow S_{2}$ is called embedding iff 
\begin{itemize}
\item $s\cdot b \in S$ implies $f(s) \cdot b  \le f(s \cdot b)$ for any $s \in S$ 
\item $\lambda_{1}(s) = \lambda_{2}(f(s))$
\end{itemize}
\end{definition}

We use $\tau_{1} \preceq_{f} \tau_{2}$ to denote that $f$ is embedding of $\tau_{1}$ into $\tau_{2}$ and write $\tau_{1} \preceq \tau_{2}$ iff there exists $f$ such that $\tau_{1} \preceq_{f} \tau_{2}$. 

Using embeddability relation $\prec$ allows to describe infinite families of trees by \emph{finitary} means.

We call a set of trees $T \subseteq T(Q)$ \emph{finitely based} iff there is a finite set $B \subseteq T(Q)$ such that   
$T = \{ \tau \mid \exists \tau' \in B  \tau' \preceq \tau \}$. Notice that finitely based set of trees are upwards closed with respect to $\preceq$, that is $\tau \in T$ and $\tau \preceq \tau'$ implies  $\tau' \in T$.

Many safety verification problems for  parameterized tree system can be reduced to the following  coverability problem.

\begin{problem}~\label{prob:mono}
Given a parameterized tree system ${\cal P} = (Q,R)$, a regulat tree language $Init  \subseteq T(Q)$ of initial configurations   and finitely based set of unsafe configurations 
$Unsafe \subseteq T(Q)$.  Does $\tau \not\Rightarrow_{\cal P}^{\ast} \tau'$ hold for all $\tau \in Init$ and all 
$\tau' \in Unsafe$? 
\end{problem} 

\begin{note} We formally defined regular tree languages over ranked alphabets. Regular tree languages over (unranked) states can be defined in a various ways. We will fix a particular convention in Assumption 1 below.  

\end{note}

Now we briefly outline the monotonic abstraction approach \cite{PTS} to verification. Given the coverability problem above  \cite{PTS} defines the monotonic abstraction $\Rightarrow_{\cal P}^{\cal A}$ of the transition relation $\Rightarrow_{\cal P}$ as follows. We have $\tau_{1} \Rightarrow_{\cal P}^{\cal A} \tau_{2}$ iff there exists a tree $\tau'$ such that $\tau' \preceq \tau_{1}$ and $\tau' \Rightarrow_{\cal P} \tau_{2}$. It is clear that such defined $\Rightarrow_{\cal P}^{\cal A}$ is an over-approximation of  $\Rightarrow_{\cal P}$. To establish the safety property, i.e. to get a positive answer to the  question of Problem~\ref{prob:mono}, \cite{PTS} proposes using a symbolic backward reachability algorithm for monotonic abstraction. Starting with an \emph{upwards} closed (wrt to $\prec$) set of unsafe configuration $Unsafe$ the algorithm proceeds iteratively with the computation of the set of configurations backwards reachable along  $\Rightarrow_{\cal P}^{\cal A}$ from $Unsafe$: 

\begin{itemize}
\item $U_{0} = Unsafe$
 \item $U_{i+1} = U_{i} \cup Pre(U_{i})$ 
\end{itemize}   

where $Pre(U) = \{\tau \mid \exists \tau' \in U \land \tau \Rightarrow_{\cal P}^{\cal A} \tau' \}$.  
Since the relation $\preceq$ is a \emph{well quasi-ordering} \cite{Kruskal:tree} this iterative process is guaranteed to stabilize, i.e. $U_{n+1} = U_{n} = U$ for some finite $n$. During the computation each $U_{i}$ is represented symbolically by a finite set of generators. Once the process stabilized on some $U$ the check is preformed on whether 
$Init \cap U = \emptyset$. If this condition is satisfied then the safety is established, for no bad configuration can be reached from initial configurations via $\Rightarrow-{\cal P}^{\cal A}$ and, a fortiori, via $\Rightarrow_{\cal P}$.

\subsection{Parameterized Tree systems to FCM}

Here we show how to translate the coverability problem (Problem~\ref{prob:mono}) into the task of disproving a first-order formula and demonstrate the \emph{relative  completeness} of the FCM method with respect to monotonic abstraction approach.  

Assume we are given an instance of the coverability problem, that is  
\begin{itemize}
\item a parameterized tree system ${\cal P} = (Q,R)$, 
\item a regular tree language $Init$ of initial configurations, given by a tree automaton $A_{I} = (Q_{I},F,\delta)$, and     
\item finitely based set of unsafe configurations $Unsafe$ given by a finite set of generators $Un  \subseteq T(Q)$.  

\end{itemize}

\noindent
For a set of states $Q$  let ${\cal F}_{Q} = \{ f_{q}^{(2)} \mid q \in Q\} \cup \{e\}$ be the set of corresponding binary functional symbols extended with a distinct functional symbol $e$  of arity $0$ (constant).

For any \emph{complete} binary tree $\tau \in T(Q)$ define its term translation $t_{\tau}$ in vocabulary ${\cal F}_{Q}$ inductively: 

\begin{itemize} 
\item  $t_{\tau} = f_{q}(e,e)$ if $\tau$ is a tree with one node labeled by a state $q$; 
\item $t_{\tau} = f_{q}(t_{\tau_{1}},t_{\tau_{2}})$ if the root of $\tau$ has two children and  $\tau = q(\tau_{1},\tau_{2})$; 
\end{itemize}

\noindent
For any not necessarily complete binary tree $\tau \in T(Q \times Q)$ define inductively  its translation $s_{\tau}$  as a set of pairs of terms in vocabulary  ${\cal F}_{Q}$: 

\begin{itemize}
\item $s_{\tau} = \{\langle f_{q_{1}}(e,e), f_{q_{2}}(e,e) \rangle \}$ if $\tau$ is a tree with one node labeled with states $(q_{1},q_{2})$; 
\item $s_{\tau} =\{\langle f_{q_1}(\rho_{1}, \rho_{2}),f_{q_2}(\rho_{3},\rho_{4}) \rangle \mid \langle \rho_{1},\rho_{3} \rangle \in s_{\tau_{1}}, \langle \rho_{2},\rho_{4} \rangle \in s_{\tau_{2}}\}$ if the root of $\tau$ is labeled by $(q_{1},q_{2})$ and it has two children $\tau_{1}$ and $\tau_{2}$, i.e. if  $\tau = (q_1,q_2)(\tau_{1}, \tau_{2})$;
\item $s_{\tau} = \{\langle f_{q_1}(\rho_{1},e),f_{q_2}(\rho_{2},e)\rangle \mid \langle \rho_{1},\rho_{2} \rangle \in s_{\tau_{1}}\} \cup \{\langle f_{q_1}(e,\rho_{1}),f_{q_2}(e,\rho_{2})\rangle \mid \langle \rho_{1},\rho_{2} \rangle \in s_{\tau_{1}} \}$ if the root of $\tau$ is labeled by $(q_{1},q_{2})$ and it has one child $\tau_{1}$, i.e. if $\tau = (q_{1},q_{2})(\tau_1)$. 
\end{itemize}

\noindent
For $\langle \rho_{1},\rho_{2} \rangle \in s_{\tau}$ we denote by $\rho_{1}^{gen}$ (by  
$\rho_{2}^{gen}$) a generalized term obtained by replacement of all occurences of constant $e$ in $\rho_{1}$ (in $\rho_{2}$, respectively,)   with distinct variables.

Now we define first-order translation of the set of rules $R$ as the following set $\Phi_{R}$ of first-order  formulae, which are all assumed to be universally closed: 

\begin{enumerate}
\item $R(\rho_{1}^{gen},\rho_{2}^{gen})$ for all $r \in R$ and $\langle \rho_{1},\rho_{2} \rangle \in s_{r}$ \hspace*{13mm}   {\bf rewriting axioms}  
\item $R(x,x)$       \hspace*{69mm}  {\bf reflexivity axiom} 
\item $R(x,y) \land R(y,z) \rightarrow R(x,z)$  \hspace*{38mm}  {\bf transitivity axiom}
\item $R(x,y) \land R(z,v) \rightarrow R(f_{q}(x,z),f_{q}(y,v))$ for all $q \in Q$ \\ \hspace*{80mm} {\bf congruence axioms}                        
\end{enumerate}

In $1)$ we additionally require that generalizations $\rho_{1}^{gen}$ and $\rho_{2}^{gen}$ should be consistent, 
that means the variables used in the generalizations are the same in the same positions. 

Now for simplicity  we make the following  
\begin{assumption}
An automaton $A_{I} = (Q_{I},F_{I},\delta_{I})$ is given over ranked alphabet ${\cal F}_{Q}$. 
\end{assumption} 

We define the  translation of $A_{I}$ as the set $\Phi_{I}$ of first-order formulae

\begin{enumerate}
\setcounter{enumi}{4}
\item $I_{\theta}(f_{q}(e,e))$ for all $\rightarrow^{e} \theta'$ and $(\theta',\theta') \rightarrow^{f_{q}} \theta$ in $\delta_{I}$;  
\item $I_{\theta_{1}}(x) \land I_{\theta_{2}}(y) \rightarrow I_{\theta_{3}}(f_{q}(x,y))$ for all $(\theta_{1},\theta_{2}) \rightarrow^{f_{q}} \theta_{3})$ in $\delta_{I}$. 
\item $\vee_{\theta \in F_{I}} I_{\theta}(x)  \rightarrow Init(x)$ 
\end{enumerate}

Let $A_{U} = (Q_{U},F_{U}, \delta_{U})$ is a tree automaton recognizing finitely based set $Unsafe$. Then its translation $\Phi_{U}$ defined analogously to the translation of $A_{I}$: 

\begin{enumerate}
\setcounter{enumi}{7}
\item $U_{\theta}(f_{q}(e,e))$ for all $\rightarrow^{e} \theta'$ and $(\theta',\theta') \rightarrow^{f_{q}} \theta$ in $\delta_{U}$;  
\item $U_{\theta_{1}}(x) \land U_{\theta_{2}}(y) \rightarrow U_{\theta_{3}}(f_{q}(x,y))$ for all $(\theta_{1},\theta_{2}) \rightarrow^{f_{q}} \theta_{3})$ in $\delta_{U}$. 
\item $\vee_{\theta \in F_{U}} U_{\theta}(x)  \rightarrow Unsafe(x)$ 
\end{enumerate}

\begin{proposition}(Adequacy of encoding)
For an  instance of the coverability problem and the translation defined above the following holds true: 
\begin{enumerate}
\item For any $\tau_{1}, \tau_{2} \in T(Q)$ if $\tau_{1} \Rightarrow_{\cal P}^{\ast} \tau_{2}$ then $\Phi_{R} \vdash R(t_{\tau_{1}},t_{\tau_{2}})$
\item For any $\tau \in Init$ $\Phi_{I} \vdash Init(t_{\tau})$;  
\item For any $\tau \in Unsafe$ $\Phi_{U} \vdash Unsafe(t_{\tau})$
\end{enumerate}
\end{proposition}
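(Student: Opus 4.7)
The plan is to handle the three parts separately, with part~(1) being the substantive one and parts~(2),(3) being direct adaptations of the argument used in Proposition~\ref{prop:init_unsafe}. I would proceed by structural/length induction throughout.

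For part~(1), I would induct on the length $n$ of the rewrite sequence $\tau_1 \Rightarrow_{\cal P}^n \tau_2$. The base case $n=0$ gives $\tau_1 = \tau_2$, and $R(t_{\tau_1},t_{\tau_2})$ follows from the reflexivity axiom~(2). For the step, write $\tau_1 \Rightarrow_{\cal P} \tau' \Rightarrow_{\cal P}^n \tau_2$; the inductive hypothesis and transitivity axiom~(3) reduce the task to the one-step case $\Phi_R \vdash R(t_{\tau_1},t_{\tau'})$. To handle one step, I would prove an auxiliary lemma by induction on the tree structure of $\tau_1$: whenever $\tau_1$ contains $lhs(r)$ as a subtree at some node $n$ and $\tau'$ is obtained by replacing that occurrence with $rhs(r)$, then $\Phi_R \vdash R(t_{\tau_1},t_{\tau'})$. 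When $n$ is the root, the conclusion follows from a rewriting axiom of clause~(1), instantiating the variables of $\rho_1^{gen},\rho_2^{gen}$ with the term translations of the subtrees that occupy the positions marked by $e$ in $s_r$. When $n$ is not the root, say $\tau_1 = q(\sigma_1,\sigma_2)$ with the rewrite occurring inside (say) $\sigma_1$, the inductive hypothesis yields $R(t_{\sigma_1},t_{\sigma'_1})$, and a congruence axiom~(4) for $f_q$ together with reflexivity on $t_{\sigma_2}$ lifts this to $R(f_q(t_{\sigma_1},t_{\sigma_2}),f_q(t_{\sigma'_1},t_{\sigma_2})) = R(t_{\tau_1},t_{\tau'})$.

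The main technical obstacle will be verifying that the rewriting axiom in clause~(1) is strong enough to cover \emph{all} ways the incomplete-binary-tree pattern of $lhs(r)$ can be matched inside a complete binary tree. This is precisely the role of $s_r$: when a node of $r$ has only one child, $s_r$ records both possibilities (child on the left with $e$ on the right, or vice versa), and the consistent generalization $\rho_1^{gen},\rho_2^{gen}$ turns each $e$ into a fresh variable that is shared between the two sides of the pair. I would therefore include a short sublemma showing that for a match of $lhs(r)$ at the root with concrete subtrees $\sigma^{(1)},\dots,\sigma^{(k)}$ hanging off the $e$-positions, there is a pair $\langle\rho_1,\rho_2\rangle \in s_r$ whose generalization, when instantiated by $t_{\sigma^{(1)}},\dots,t_{\sigma^{(k)}}$, yields exactly $\langle t_{lhs(r)\text{-image in }\tau_1}, t_{rhs(r)\text{-image in }\tau'}\rangle$. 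This is a routine induction on $r$ but must be stated carefully to align positions consistently on the $lhs$ and $rhs$ sides.

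For parts~(2) and~(3) I would mimic the proof of Proposition~\ref{prop:init_unsafe} verbatim. Namely, prove by induction on the depth of $\tau$ the lemma: if $\tau \Rightarrow_{A_I}^r \theta$ under some run $r$, then $\Phi_I \vdash I_\theta(t_\tau)$; the base case covers a single leaf and uses clause~(5) (noting that $t_\tau = f_q(e,e)$ for a singleton-labeled tree), and the step case uses clause~(6) together with the inductive hypothesis applied to the two children. The proposition then follows by taking $\theta \in F_I$ and applying clause~(7). Part~(3) is identical with $U$ and $\Phi_U$ in place of $I$ and $\Phi_I$, using clauses~(8)--(10). No new ideas beyond those already used in Proposition~\ref{prop:init_unsafe} are needed here.
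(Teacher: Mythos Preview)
Your proposal is correct and, in fact, far more detailed than the paper's own proof, which consists of the single sentence ``proceeds by straightforward inspection of definitions.'' Your induction on the length of the rewrite sequence for part~(1), with the auxiliary structural induction for the one-step case (rewriting axiom at the root, congruence plus reflexivity to propagate upward), and your reuse of the argument from Proposition~\ref{prop:init_unsafe} for parts~(2) and~(3), are exactly what one would write if asked to fill in the details the paper omits.
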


\begin{proof} proceeds by straightforward inspection of definitions. 
\end{proof}
\begin{corollary}(safety verification)
If $\Phi_{R} \cup \Phi_{I} \cup \Phi_{U} \not\vdash \exists  x \exists y  Init(x) \land Unsafe(y) \land R(x,y)$ then 
the coverability problem has a positive answer, that is $\tau \not\Rightarrow_{\cal P}^{\ast} \tau'$ holds for all $\tau \in Init$ and all 
$\tau' \in Unsafe$.   
\end{corollary}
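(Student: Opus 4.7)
The plan is to prove the contrapositive of the stated implication and then lean almost entirely on the preceding adequacy proposition. Suppose that the coverability problem does \emph{not} have a positive answer, so that there exist $\tau \in Init$ and $\tau' \in Unsafe$ with $\tau \Rightarrow_{\cal P}^{\ast} \tau'$. The goal is to derive
\[
\Phi_{R} \cup \Phi_{I} \cup \Phi_{U} \vdash \exists x \exists y \; (Init(x) \land Unsafe(y) \land R(x,y)).
\]

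First I would invoke the three parts of the adequacy proposition to produce the three crucial single-atom derivations. Part~(2) applied to $\tau$ yields $\Phi_{I} \vdash Init(t_{\tau})$. Part~(3) applied to $\tau'$ yields $\Phi_{U} \vdash Unsafe(t_{\tau'})$. Part~(1) applied to the transition sequence $\tau \Rightarrow_{\cal P}^{\ast} \tau'$ yields $\Phi_{R} \vdash R(t_{\tau},t_{\tau'})$. Because provability in first-order logic is monotone in the axiom set, each of these three judgements remains valid when the premise set is enlarged to the union $\Phi_{R} \cup \Phi_{I} \cup \Phi_{U}$.

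The remaining step is purely propositional and quantificational bookkeeping: conjunction introduction combines the three derivations into a derivation of $Init(t_{\tau}) \land Unsafe(t_{\tau'}) \land R(t_{\tau},t_{\tau'})$, and two applications of existential introduction, with witnesses $t_{\tau}$ for $x$ and $t_{\tau'}$ for $y$, yield the required existential consequence. Since $t_{\tau}$ and $t_{\tau'}$ are closed ground terms in the signature over which $\Phi_{R} \cup \Phi_{I} \cup \Phi_{U}$ is written, the existential introductions are unproblematic. Contraposing then delivers exactly the conclusion of the corollary.

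I do not anticipate a genuine obstacle here: the statement is essentially a one-line consequence of the adequacy proposition, and the only thing one has to be slightly careful about is that the three components of $\Phi_{R} \cup \Phi_{I} \cup \Phi_{U}$ share the same signature (in particular the same function symbols ${\cal F}_{Q}$ and the same constant $e$), so that the terms $t_{\tau}$ and $t_{\tau'}$ that appear in all three subderivations are interpreted uniformly. This uniformity is built into the translations $t_{(-)}$ and $s_{(-)}$, so the step is immediate.
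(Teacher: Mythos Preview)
Your argument is correct and is exactly the intended one: the paper states this result as an immediate corollary of the adequacy proposition without giving any further proof, and your contrapositive-plus-existential-introduction reasoning is precisely the routine unpacking of why it follows. There is nothing to add.
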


\begin{theorem}(relative completeness)
Given a parameterized tree system ${\cal P} =   (Q,R)$, the tree regular language of initial configurations $Init$, 
finitely based set of unsafe configurations $Unsafe$. Assume the backward symbolic reachability algorithm for monotonic abstraction described above 
terminates with the fixed-point $U = U_{n+1} = U_{n}$ for some $n$ and  $Init \cap U = \emptyset$. Then there exists a finite model for $\Phi_{R} \land  \Phi_{I} \land  \Phi_{U} 
\land \neg (\exists  x \exists y  Init(x) \land Unsafe(y) \land R(x,y))$.   
\end{theorem}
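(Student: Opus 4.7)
The proof will mirror that of Theorem~\ref{th:completeness}, replacing the RTMC over-approximation ${\cal R}$ with the set $S = T(Q) \setminus U$, i.e.\ the complement of the fixed-point of the backward monotonic abstraction. Kruskal's theorem makes $\preceq$ a well quasi-ordering, and the termination of the backward algorithm supplies $U$ with a finite basis; hence $U$ is regular, as the finite union of the regular upward closures of basis elements, and so is $S$. Three properties of $S$ are crucial: $Init \subseteq S$ (since $Init \cap U = \emptyset$), $S \cap Unsafe = \emptyset$ (since $Unsafe \subseteq U$), and $S$ is forward-closed under $\Rightarrow_{\cal P}$ (since $U$ is closed under $\mathit{Pre}$ of the monotonic abstraction $\Rightarrow_{\cal P}^{\cal A}$, which subsumes $\Rightarrow_{\cal P}$).

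The finite model will be built from deterministic complete tree automata $A_I, A_U, A_S$ over ${\cal F}_Q$ recognising the term-encodings of $Init, Unsafe, S$ respectively. The domain $D$ is taken to be the reachable states of the product automaton $A_I \times A_U \times A_S$. Interpret $e$ as the state reached on the constant, each $f_q$ as the product transition on $f_q$, and the unary predicates $I_\theta, U_\theta, Init, Unsafe$ by the natural projections onto the appropriate component states (or final-state memberships); this validates axioms $(5)$--$(10)$ by bottom-up simulation. Take $R$ to be the least binary relation on $D$ satisfying axioms $(1)$--$(4)$, which then hold by minimality.

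Let $X \subseteq D$ consist of those elements whose $A_S$-component is accepting. Then $[Init] \subseteq X$ (the $A_I$-accepting witness of any element of $[Init]$ is a $t_\tau$ with $\tau \in Init \subseteq S$, whose $A_S$-projection is accepting) and $[Unsafe] \cap X = \emptyset$ (symmetrically, from $Unsafe \cap S = \emptyset$). The safety claim reduces to showing that $X$ is closed forward under $[R]$, which I plan to prove by induction on the derivation of $[R]$, maintaining the stronger invariant: for every $(x,y) \in [R]$ and every complete binary tree $\tau_x \in S$ with $[t_{\tau_x}] = x$, there exists $\tau_y \in S$ with $[t_{\tau_y}] = y$ and $\tau_x \Rightarrow_{\cal P}^{\ast} \tau_y$. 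The rewriting axiom is discharged by substituting term-encoding witnesses of the domain elements for the variables, interpreting this as a rule application on $S$-trees, and invoking forward closure of $S$; reflexivity and transitivity chain trivially.

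The main obstacle is the congruence axiom $(4)$. Given $R(x', y')$, $R(z', v')$ and a witness $\tau_x \in S$ for $[f_q](x', z')$, the plan is to decompose $\tau_x$ as $q(\tau_1, \tau_2)$ with $[t_{\tau_1}] = x'$ and $[t_{\tau_2}] = z'$, apply the inductive hypothesis in each sub-rewriting to obtain $\tau_1 \Rightarrow_{\cal P}^{\ast} \tau_1'$ and $\tau_2 \Rightarrow_{\cal P}^{\ast} \tau_2'$ in $S$, and then use parallel rewriting to get $\tau_x \Rightarrow_{\cal P}^{\ast} q(\tau_1', \tau_2') \in S$ by forward closure. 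The canonical decomposition requires refining $A_S$ so that each state records at least the top symbol and the $A_S$-projections of the immediate children of the processed term: with such a refinement, $[f_q](x', z') \in X$ forces the root label of $\tau_x$ to be $q$ and pins down the children's $A_S$-projections, making the inductive step go through.
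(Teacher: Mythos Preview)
Your construction is considerably more elaborate than the paper's. The paper takes as domain just $Q_{U^\ast}$, the state set of a single deterministic automaton recognising the fixed point $U$, and gives \emph{least} interpretations to \emph{all} predicates, including $I_\theta$, $Init$, $U_\theta$, $Unsafe$; no product with $A_I$ or $A_U$ is formed. Minimality alone already yields $[Init]\subseteq\bar F_{U^\ast}$ and $[Unsafe]\subseteq F_{U^\ast}$, so the separate automata and projection interpretations you introduce are unnecessary overhead. The paper's analogue of your forward-closure argument is the one-line claim $[Init]\star[R]\subseteq\{[t_\tau]\mid \exists\tau'\in Init\;\tau'\Rightarrow_{\cal P}^\ast\tau\}$, justified only by ``the minimality conditions on interpretations of $Init$ and $R$''. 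So while the paper's route is much shorter, it does not actually address the congruence obstacle you correctly isolate; your proposal is in that sense more honest about where the work lies.

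That said, your own plan has a gap. The refinement you propose for $A_S$ records only the top symbol and the immediate children's $A_S$-states, i.e.\ depth one. This suffices for axiom (4), but the rewriting axioms (1) carry patterns $\rho_1^{gen}$ whose depth equals the depth of the corresponding rule in $R$, which may exceed one. Your stated invariant quantifies over \emph{every} $\tau_x\in S$ with $[t_{\tau_x}]=x$ and demands a $\tau_y$ with $\tau_x\Rightarrow_{\cal P}^\ast\tau_y$; the phrase ``substituting term-encoding witnesses of the domain elements for the variables'' manufactures a particular $\tau_x'$ with $[t_{\tau_x'}]=x$ rather than working with the given $\tau_x$, so it does not discharge the invariant as written. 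The repair is to refine the automaton so that each state records the full term structure up to the maximal rule depth together with the residual states at the frontier; then $[t_{\tau_x}]=x$ genuinely forces $\tau_x$ to match $lhs(r)$ at the top with determined subtree states, and both the rewriting and congruence cases go through by the same decomposition.
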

\begin{proof} First we observe that since the fixed-point $U$ has a finite set of generators it is a regular tree 
language. Let $A_{U^{\ast}} = (Q_{U^{\ast}},F_{U^{\ast}},\delta_{U^{\ast}})$ be a deterministic tree automaton 
recognizing $U$. 
We take $Q_{U^{\ast}}$ as a domain of the required model. 
Interpretations of all functional symbols from ${\cal F}_{Q}$ are given by $\delta_{U^{\ast}}$: 
\begin{itemize}
\item $[f_{q}](\theta_{1},\theta_{2}) = \theta_{3}$ iff $(\theta_{1},\theta_{2}) \rightarrow^{f_{q}} \theta_{3}$ is in $\delta_{U^{\ast}}$ 
\item $[e] = \theta$, where $\rightarrow^{e} \theta$ is in $\delta_{U^{\ast}}$. 
\end{itemize} 
Interpretations of predicates $R, I_{\theta}, Init, U_{\theta}, Unsafe$ are defined inductively as the least sets of 
tuples, or elements of the domains satisfying the axioms 1-4, 5-7, 8-10, respectively. 
That concludes the definition of the model which we denote by ${\cal M}$. 
We have ${\cal M} \models  \Phi_{R} \land \Phi_{I} \land \Phi_{U}$ by construction. 
Now we check that ${\cal M} \models \neg (\exists  x \exists y  Init(x) \land Unsafe(y) \land R(x,y))$ 
is satisfied in the model. We have 
\begin{enumerate}
\item $[Init]\star[R] \subseteq \{[\tau] \mid \exists \tau' \in Init \; \tau' \Rightarrow^{\ast}_{\cal P} \tau \}$ (by the minimality conditions on interpretations of $Init$ 
and $R$)
\item $\{[\tau] \mid \exists \tau' \in Init \; \tau' 
\Rightarrow^{\ast}_{\cal P} \tau \} \subseteq \bar{F}_{U^{\ast}} = Q - F_{U^{\ast}}$ (by assumption $U \cap Init = \emptyset$)
\item $[Unsafe] \subseteq F_{U^{\ast}}$ (by $Unsafe \subseteq U$); 
\item $([Init]\star[R]) \cap [Unsafe] = \emptyset$ (by 1-3). 
\end{enumerate}
\end{proof}

\subsection{The case study, II}

In this section we illustrate the discussed variation of the FCM method by applying it again  to the verification of Two-way Token Protocol, but specified differently.  The specification of this protocol using trees over states and tree rewriting is taken from \cite{PTS}. The set of states $Q = \{n,t\}$, where $n$ and $t$  denote local states `no token' and `token', respectively. The set of $R$ of rewriting rules consists of the following rules: 
\begin{itemize}
\item $\langle t,n \rangle(\langle n,t \rangle)$; 
\item $\langle n,t \rangle(\langle t,n \rangle)$; 
\end{itemize}

The set $Init$ of initial configurations consists all complete binary trees over $Q$ with exactly one token. 
The set $Unsafe$ of unsafe configuration consists of all complete binary trees over $Q$ with 
at least two tokens. The set of the formulae $\Phi$ below is a first-order translation (in Mace4 syntax) of the verification task.


{\footnotesize
\begin{verbatim}

% rewriting rules

R(ft(fn(y,z),x),fn(ft(y,z),x)). 
R(ft(x,fn(y,z)),fn(x,ft(y,z))). 
R(fn(ft(y,z),x),ft(fn(y,z),x)). 
R(fn(x,ft(y,z)),ft(x,fn(y,z))). 

% reflexivity

R(x,x).

%congruence
(R(x,y) & R(z,v)) -> R(fn(x,z),fn(y,v)). 
(R(x,y) & R(z,v)) -> R(ft(x,z),ft(y,v)).

% transitivity

(R(x,y) & R(y,z)) -> R(x,z). 

% Initial states automaton 
I1(fn(e,e)). 
(I1(x) & I1(y)) -> I1(fn(x,y)). 
(I1(x) & I1(y)) -> Init(ft(x,y)). 
(Init(x) & I1(y)) -> Init(fn(x,y)).
(I1(x) & Init(y)) -> Init(fn(x,y)). 

% Unsafe states automaton
B1(ft(x,y)).
B1(x) -> B1(fn(x,y)). 
B1(y) -> B1(fn(x,y)).  
B1(x) -> Unsafe(ft(x,y)). 
B1(x) -> Unsafe(ft(y,x)).
B1(x) & B1(y) -> Unsafe(fn(x,y)).
B1(x) & B1(y) -> Unsafe(ft(x,y)).  
Unsafe(x) -> Unsafe(fn(x,y)). 
Unsafe(x) -> Unsafe(fn(y,x)). 
Unsafe(x) -> Unsafe(ft(x,y)). 
Unsafe(x) -> Unsafe(ft(y,x)).



\end{verbatim}
}

Now, in order to establish safety, it is sufficient to show that $\Phi \not\vdash \exists x \exists y Init(x) \land R(x,y) \land Unsafe(y)$. Finite model finder Mace4 finds a model for $\Phi \land \neg(\exists x \exists y Init(x) \land R(x,y) \land Unsafe(y))$ in $0.04s$.

\section{Experimental results}

We have applied both presented versions of FMC method to the verification  of several parameterized tree-shaped systems. 
The tasks specified in RTMC tradition were taken from \cite{RTMC} and the first translation was used. To compare with monotonic abstraction based methods we used the second translation for the tasks from \cite{PTS}.   

\noindent 
In the experiments we used the finite model finder Mace4\cite{McCune} within the package 
Prover9-Mace4, Version 0.5, December 2007. 
The system configuration used in the experiments:   Microsoft Windows XP Professional, Version 2002, Intel(R) Core(TM)2 Duo CPU, T7100 @ 1.8Ghz  1.79Ghz,  1.00 GB of  RAM.  The time measurements are done by Mace4 itself, upon completion 
of the model search it communicates the CPU time used. The table below lists the parameterized tree  
protocols 
and shows the time it took Mace4 to 
find a countermodel and verify a safety property.  
The time shown is an average of 10 attempts. We also show the time reported on the verification of the same protocols by alternative methods. 

\subsection{FCM vs RTMC}

\begin{center}
  \begin{tabular}{| l | r | r | }
    \hline
    Protocol & Time & Time reported in \cite{ARTMC}$^{\ast}$\\ \hline
    Token & 0.02s & 0.06s \\ \hline
    Two-way Token  & 0.03s & 0.09s \\ \hline
  \end{tabular}
\end{center}

$^{\ast}$ the system configuration  used in \cite{ARTMC} was \emph{Intel Centrino 1.6GHZ with 768MB of RAM}

Notice that \cite{ARTMC} discusses different methods for enhancement of RTMC within the abstract-check-refine paradigm and we included in the table the best times reported in \cite{ARTMC} for each verification problem.

\subsection{FCM vs monotonic Abstraction} 

\begin{center}
  \begin{tabular}{| l | r | r | }
    \hline
    Protocol & Time & Time reported in \cite{PTS}$^{\ast}$\\ \hline
    Token & 0.02s & 1s \\ \hline
    Two-way Token  & 0.03s & 1s \\
    \hline
    Percolate & 0.02s & 1s \\
\hline
  Leader Election & 0.03s & 1s\\
\hline
    Tree-arbiter & 0.02s & 37s \\
\hline 
IEEE 1394 & 0.04s & 1h15m25s\\
\hline
  \end{tabular}
\end{center}

$^{\ast}$ the system configuration  used in \cite{PTS} was \emph{dual Opteron 2.8 GHZ with 8 GB of RAM}

All specifications used in the experiments and Mace4 output can be found in \cite{AL09}. 

\section{Related work}~\label{sec:rel}

As mentioned Section 1 the approach to verification using the modeling of protocol executions by 
first-order derivations and together with  countermodel finding for disproving was introduced within the research on the 
formal analysis of cryptographic protocols (\cite{Weid99},\cite{S01},\cite{GL08}, \cite{JW09}, \cite{Gut}).

This work continues the exploration of the FCM approach presented in \cite{AL09,Avocs09,ALWIng10,AL10,AL10arxiv,AL11}. In \cite{AL10}(which is an extended version of \cite{Avocs09}) it was shown that FCM provides a decision procedure for safety verification for lossy channel systems, and that FCM can be used for efficient verification of parameterised cache coherence protocols.  The relative completeness of the FCM with respect to  regular model checking and methods based on monotonic abstraction for linear parameterized systems was established in \cite{AL10arxiv}(which is an extended version of the abstract \cite{ALWIng10}). The relative completeness of the FCM with respect to tree completion techniques for general term rewriting systems is shown in \cite{AL11}. Our treatment of tree rewriting in \ref{subsec:pts}  can be seen as a particular case of term rewriting considered in \cite{AL11} with slightly different  translation of tree automata. 
Detailed comparison and/or  unified treatment of FCM vs Tree Completion vs RTMC vs Monotonic Abstraction to be given elsewhere.  Here we notice only that the reason for FCM to succeed in verification of safety of various classess of infinite-state and parameterized systems is the presence of \emph{regular} sets of configurations (invariants) covering all reachable configurations and disjoint with the sets of unsafe configurations.

In a more general context, the work  we present in this paper is related to the concepts of \emph{proof by consistency} \cite{pbc}, 
and  \emph{inductionless induction} \cite{ii} and can be seen as an investigation into the power of these concepts in 
the particular setting of the verification of parameterized tree systems via finite countermodel finding.
          
\section{Conclusion} 

We have shown how to apply generic finite model finders in the parameterized verification of tree-shaped systems, have demonstrated the relative completeness of the method with respect to regular tree model checking and to the methods based on monotonic abstraction and have illustrated its practical efficiency. Future work includes the investigation of scalability of FCM,  and its applications to software verification.

\section*{Acknowledgments} 

The author is grateful to anonymous referees of FMCAD 2011 conference who provided with many helpful comments on the previous version of this paper.

\end{document}